\newtheorem{lemma}{Lemma}
\newtheorem{remark}{Remark}
\newtheorem{definition}{Definition}
\newtheorem{theorem}{Theorem}
\begin{document}
\pagenumbering{gobble}

\graphicspath{{./images/}}

\thispagestyle{plain} \pagestyle{plain} \date{}
\title{Binary Log-Linear Learning with Stochastic Communication Links}
\author{Arjun~Muralidharan,~Yuan~Yan~and~Yasamin~Mostofi
\thanks{This work is supported in part by NSF NeTS award $\#$ 1321171.}
\thanks{The authors are with the Department of Electrical and Computer Engineering,
University of California Santa Barbara, Santa Barbara, CA 93106, USA email:
$\{$arjunm, yuanyan, ymostofi$\}$@ece.ucsb.edu.}}
\maketitle

\vspace{-0.5in}

\begin{abstract} 

In this paper, we consider distributed decision-making over stochastic communication links in multi-agent systems. We show how to extend the current literature on potential games with binary log-linear learning (which mainly focuses on ideal communication links) to consider the impact of stochastic communication channels. More specifically, we derive conditions on the probability of link connectivity to achieve a target probability for the set of potential maximizers (in the stationary distribution). Furthermore, our toy example demonstrates a transition phenomenon for achieving any target probability for the set of potential maximizers.

\end{abstract}

\section{Introduction}\label{sec:introduction}

Non-cooperative game theory has recently emerged as a powerful tool for the distributed control of multi-agent systems \cite{marden_cooperative_potential,martinez_cov_game,shamma_cov_game}. By designing proper local utility functions and learning algorithms that satisfy certain properties, desirable global behaviors can be achieved.
Potential games \cite{monderer_potential} are an important class of non-cooperative games and have recently received considerable attention in the literature \cite{marden_welfare}. In potential games, the local utility function of the agents is aligned with a potential function in order to achieve a global objective through local decisions. 

There are a number of learning algorithms that can guarantee the convergence to a Nash equilibrium for potential games such as fictitious play \cite{monderer_fictitious} and joint strategy fictitious play\cite{marden_JSFP}. However, a Nash equilibrium may be a sub-optimum outcome and not the potential maximizer. Log-linear learning (first introduced in \cite{blume_log-linear}), on the other hand, is a learning mechanism that can guarantee convergence to the set of potential maximizers. As a result, it has been the subject of considerable research recently \cite{marden_revisiting}. Binary log-linear learning \cite{marden_revisiting},\cite{arslan_autonomous} is a variant of log-linear learning which can further handle constrained actions sets, i.e.\ scenarios where the future actions of the players are limited based on their current action (like in robotic networks). 

While considerable progress has been made for distributed decision making using potential games, ideal communication links are often assumed. In other words, it is typically assumed that an agent can hear from all the other agents that will impact its utility function. In realistic communication environments with packet-dropping stochastic communication links, this is simply not possible. For instance, Fig.\ \ref{fig:Allscales_blue_gray} shows an example of real channel measurements.  We can see that the channel exhibits a great degree of stochasticity due to the shadowing and multipath fading components. Thus, it is the goal of this paper to bring an understanding of the impact of stochastic packet-dropping communication links on potential games with binary log-linear learning, where each link is properly represented with an action-dependent probability of connectivity.  By extending \cite{marden_revisiting}, we derive conditions on the temperature (defined in Section \ref{sec:BLLL}) and probabilities of connectivity to achieve a given target probability (in the stationary distribution) for the set of potential maximizers (Theorems \ref{thm:blll-comm} and \ref{thm:blll-comm-2}). In Section \ref{sec:toy_game}, in a toy example, we further observe a transition behavior for achieving any target probability.

\section{Problem Setup}\label{sec:problemsetup}

In this section, we first introduce some basic concepts and properties of potential games. We then review the binary log-linear learning algorithm and the theory of resistance trees, which we use in our subsequent analysis. Finally, we motivate the need for considering stochastic communication links.
\subsection{Potential Game (see \cite{drew_game_theory} for more details)}\label{sec:background}
A game $\mathcal{G} = \{\mathcal{I},\{\mathcal{A}_{i}\}_{i\in \mathcal{I}},\{U_i\}_{i\in \mathcal{I}}\}$ is defined by its three components:
\begin{enumerate}
\item $\mathcal{I}=\{1,2,\cdots,n\}$ is the set of players/agents/robots;
\item $\mathcal{A}_{i}$ is the set of all the actions (choices) that agent $i$ has. Then, an action profile $a=(a_1,\cdots,a_n)\in\mathcal{A}$ denotes the collection of actions of all the agents, where $\mathcal{A}=\mathcal{A}_{1}\times \cdots \times \mathcal{A}_{n}$ is the space of all action profiles;
\item $U_{i}:\mathcal{A} \to \mathbb{R}$ is the utility function of agent $i$.
\end{enumerate}
One of the most important concepts in game theory is that of a pure Nash equilibrium, which is defined as follows.

\begin{definition}[Pure Nash Equilibrium]\label{def:nasheq}
Consider a game $\mathcal{G} = \{\mathcal{I},\{\mathcal{A}_{i}\}_{i\in \mathcal{I}},\{U_i\}_{i\in \mathcal{I}}\}$. An action profile $a^{\text{NE}}$ is said to be a pure Nash equilibrium of the game if and only if
\begin{equation}\label{eq:nasheq}
U_{i}(a^{\text{NE}}) \geq U_{i}(a_{i},a_{-i}^{\text{NE}}), \quad \forall \;a_{i} \in  \mathcal{A}_{i} \text{ and } \forall\; i \in \mathcal{I},
\end{equation}
where $a_{-i}=(a_1,\cdots,a_{i-1},a_{i+1},\cdots,a_n) \in \mathcal{A}_{-i} $ denotes the action profile of all the agents except $i$ and $\mathcal{A}_{-i} = \mathcal{A}_{1}\times \cdots \mathcal{A}_{i-1} \times \mathcal{A}_{i+1} \times \cdots \times \mathcal{A}_{n}$.
\end{definition}

As can be seen, a game has reached a pure Nash equilibrium if and only if no agent has the motivation to unilaterally change its action. In this paper, we are interested in potential games.  Potential games can have broad applications in distributed multi-robot systems since they allow each robot to make local decisions while a global objective (the potential function) is optimized.

\begin{definition}[Potential Games \cite{monderer_potential}]\label{def:potential_game}
A game $\mathcal{G} = \{\mathcal{I},$ $\{\mathcal{A}_{i}\}_{i\in \mathcal{I}},\{U_i\}_{i\in \mathcal{I}}\}$ is said to be a potential game with potential function $\phi:\mathcal{A} \to \mathbb{R}$ if
\begin{align}\label{eq:potential_game}
&U_{i}(a'_{i},a_{-i}) -U_{i}(a_{i},a_{-i})  = \phi(a'_{i},a_{-i}) -\phi(a_{i},a_{-i}),\nonumber\\
&\hspace{0.5in} \forall\; a_{i},a'_{i} \in  \mathcal{A}_{i} \text{,}\; \forall \;a_{-i} \in  \mathcal{A}_{-i} \text{ and } \forall\; i \in \mathcal{I}.
\end{align}
\end{definition}

As can be seen, a potential game requires a perfect alignment between the potential function and the agents' local utility functions. It is straightforward to confirm that the action profile that maximizes the potential function is a pure Nash equilibrium. Hence, a pure Nash equilibrium is guaranteed to exist in potential games. 

\subsection{Binary Log-Linear Learning (see \cite{marden_revisiting})}\label{sec:BLLL}

In several scenarios, the set of possible actions that an agent can take is limited by its current action. For instance, in multi-robot systems, the next possible position of an agent is limited by its current position and velocity. Formally, we refer to this limited set as an agent's constrained action set i.e., $\mathcal{A}_{i}^{\text{cons}}(a_{i}) \subseteq \mathcal{A}_{i}$ is agent $i$'s constrained action set where $a_{i}$ is its current action.

Binary Log-linear learning (BLLL) is a variant of Log-linear learning (as shown in \cite{marden_revisiting}) which can handle constrained action sets. It is summarized as follows. At each time step $t$, an agent $i \in \mathcal{I}$ is chosen randomly (uniformly) and is allowed to alter its action.\footnote{Note that the selection does not require coordination among the nodes and can be achieved through each agent using a Poisson clock \cite{boyd_randomized}.} All the other agents repeat their previous actions, i.e. $a_{-i}(t)=a_{-i}(t-1)$. Agent $i$ then plays according to the following strategy:
\begin{align}
p_{i}^{a_{i}(t-1)}(t) & = {e^{{1 \over \tau}U_{i}(a(t-1))} \over {e^{{1 \over \tau}U_{i}(a(t-1))}+e^{{1 \over \tau}U_{i}(\hat{a}_{i},a_{-i}(t-1))}}}, \label{eq:blll-1}
\\ p_{i}^{\hat{a}_{i}}(t) & = {e^{{1 \over \tau}U_{i}(\hat{a}_{i},a_{-i}(t-1))} \over {e^{{1 \over \tau}U_{i}(a(t-1))}+e^{{1 \over \tau}U_{i}(\hat{a}_{i},a_{-i}(t-1))}}},\label{eq:blll-2}
\end{align}
where $\hat{a}_{i}$ is an action that is chosen uniformly from the constrained action set $\mathcal{A}^{\text{cons}}_{i}(a_{i}(t-1))$, $p_{i}^{a_{i}(t-1)}(t)$ is the probability of repeating its previous action, $p_{i}^{\hat{a}_{i}}(t)$ is the probability of selecting action $\hat{a}_{i}$, and $\tau > 0$ is the temperature.

Moreover, the constrained action sets should possess the following two properties:
\begin{definition}[Reachability]\label{def:reachability}
For all $i \in \mathcal{I}$ and any action pair $a_{i}^{0},a_{i}^{m} \in \mathcal{A}_{i}$, there exists a sequence of actions $a_{i}^{0} \to a_{i}^{1} \to \cdots \to a_{i}^{m-1} \to a_{i}^{m}$ satisfying $a^{k}_{i} \in \mathcal{A}^{\text{cons}}_{i}(a^{k-1}_{i})$, $\forall \; k \in \{1,\cdots,m\}$.
\end{definition}
\begin{definition}[Reversibility]\label{def:reversability}
For all $i \in \mathcal{I}$ and any action pair $a_{i}^{0},a_{i}^{1} \in \mathcal{A}_{i}$, if $a^{1}_{i} \in \mathcal{A}^{\text{cons}}_{i}(a^{0}_{i})$, then we have $a^{0}_{i} \in \mathcal{A}^{\text{cons}}_{i}(a^{1}_{i})$.
\end{definition}

Note that Definition \ref{def:reachability} implies that any action profile in $\mathcal{A}$ can be reached in finite time steps. Definition \ref{def:reversability} means that each agent can go back to its previous action.

\begin{theorem} (see \cite{marden_revisiting})\label{thm:blll}
Consider a potential game with constrained action sets that satisfy the reachability and reversibility properties. BLLL ensures that the support of the stationary distribution is the set of potential maximizers, as $\tau\rightarrow 0$.
\end{theorem}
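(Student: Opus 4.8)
\section*{Proof Proposal}

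The plan is to cast binary log-linear learning (BLLL) as a regularly perturbed Markov chain and then appeal to the theory of resistance trees. Put $\epsilon = e^{-1/\tau}$, so that $\tau\to 0$ corresponds to $\epsilon\to 0$. For each $\tau>0$, \eqref{eq:blll-1}--\eqref{eq:blll-2} (together with the uniform agent selection and the uniform proposal from $\mathcal{A}_i^{\mathrm{cons}}(a_i)$) defines an irreducible, aperiodic Markov chain $P^{\tau}$ on the finite set $\mathcal{A}$: irreducibility follows from the reachability property and the fact that every logit probability is strictly positive, and aperiodicity from the strictly positive probability of repeating the current action. As $\tau\to 0$, $P^{\tau}$ converges to the unperturbed ``better-reply'' chain $P^{0}$. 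The first task is to check that $\{P^{\tau}\}$ is a \emph{regular} perturbation, i.e.\ that whenever $P^{\tau}_{a\to b}>0$ for all small $\tau$, the resistance $R(a\to b):=-\lim_{\tau\to 0}\tau\ln P^{\tau}_{a\to b}$ exists and is finite. Since the only transitions of positive probability are single-agent deviations $a\to(\hat a_i,a_{-i})$ with $\hat a_i\in\mathcal{A}_i^{\mathrm{cons}}(a_i)$, and the selection/proposal factors do not depend on $\tau$, a direct computation on the two-point logit rule gives
\[
R\!\left(a\to(\hat a_i,a_{-i})\right)=\max\{0,\;U_i(a)-U_i(\hat a_i,a_{-i})\}=\max\{0,\;\phi(a)-\phi(\hat a_i,a_{-i})\},
\]
where the last equality is exactly the potential-game identity \eqref{eq:potential_game}. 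This is the step that couples the dynamics to the potential function.

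By the resistance-tree characterization of stochastically stable states, the support of the stationary distribution as $\tau\to0$ is the set of states minimizing the stochastic potential $\gamma(a):=\min_{T\in\mathcal{T}(a)}\sum_{(x\to y)\in T}R(x\to y)$, where $\mathcal{T}(a)$ is the collection of spanning trees of the transition graph directed toward the root $a$. It therefore remains to show that $\arg\min_a\gamma(a)$ equals the set $\mathcal{A}^{*}:=\arg\max_a\phi(a)$ of potential maximizers.

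The heart of the argument is a tree-surgery lemma that uses reversibility. By Definition~\ref{def:reversability}, for every transition edge $a\to b$ the reverse edge $b\to a$ is also a transition edge, and the resistance formula above yields the telescoping identity $R(a\to b)-R(b\to a)=\phi(a)-\phi(b)$; summing along any directed path $\pi$ from $a$ to $b$ with reversal $\bar\pi$ gives $R(\pi)-R(\bar\pi)=\phi(a)-\phi(b)$. Now fix $a^{*}\in\mathcal{A}^{*}$ and any $a\notin\mathcal{A}^{*}$, and let $T$ be a minimum-resistance tree rooted at $a$. It contains a unique directed path $\pi$ from $a^{*}$ to $a$; replacing the edges of $\pi$ by those of $\bar\pi$ and keeping all other edges of $T$ produces a spanning tree $T'$ directed toward $a^{*}$ with $R(T')=R(T)-R(\pi)+R(\bar\pi)=R(T)-(\phi(a^{*})-\phi(a))<R(T)$, since $\phi(a^{*})>\phi(a)$. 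Hence $\gamma(a^{*})\le R(T')<R(T)=\gamma(a)$, so no non-maximizer is stochastically stable. Running the same surgery between two maximizers shows $\gamma$ is constant on $\mathcal{A}^{*}$, so every potential maximizer attains the minimum of $\gamma$. Combining, the support of the limiting stationary distribution is precisely $\mathcal{A}^{*}$.

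I expect the main obstacle to be the careful execution of the tree-surgery step: verifying that reversing the root-path of $T$ yields a genuine spanning tree directed toward the new root (which is where reversibility is essential, since each reversed edge must be a legitimate transition), and that the resistance bookkeeping is exact because only the edges of $\pi$ are swapped while the rest of $T$ is untouched. A secondary, more routine point is the explicit verification of the regular-perturbation hypotheses (irreducibility and aperiodicity of $P^{\tau}$ for every $\tau>0$, and finiteness of all edge resistances) so that the resistance-tree theorem applies without modification.
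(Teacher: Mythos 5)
Your proposal is correct and follows exactly the route the paper relies on: the paper does not prove Theorem~\ref{thm:blll} itself (it cites \cite{marden_revisiting}), but Section~\ref{sec:BLLL_perturbed} sets up precisely your framework — BLLL as a regular perturbed Markov process with resistance $R(a^0\to a^1)=\max\{U_i(a^0),U_i(a^1)\}-U_i(a^1)$, followed by Young's resistance-tree theorem. Your tree-surgery argument correctly fills in the step the paper defers to \cite{marden_revisiting} and \cite{young_evolution}, with reversibility used in the right place to make the reversed path feasible.
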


We next introduce the concept of an asynchronous best reply process over constrained action sets, which is a process where each agent locally improves its own utility function when it is its turn to alter its action. 

\begin{definition}\label{def:best-reply}
An asynchronous best reply process over constrained action sets is defined as follows. At each time $t>0$, an agent $i$ is randomly chosen (uniformly) and allowed to alter its action. All other agents repeat their current action, i.e. $a_{-i}(t) = a_{-i}(t-1)$. Agent $i$ then selects an action $\hat{a}_{i}$ uniformly from its constrained action set, i.e. $\hat{a}_{i}\sim\text{unif}(\mathcal{A}_{i}^{\text{cons}}(a_{i}(t-1)))$. It then plays the action which maximizes its utility function: ${a}_{i}(t) \in \big\{a_{i}\in\left\{\hat{a}_i,a_{i}(t-1)\right\}:U_{i}(a_{i},a_{-i}(t-1)) = $ $\max\left\{ U_{i}(a(t-1)),U_{i}(\hat{a}_{i},a_{-i}(t-1))\right\}\big\}$.
\end{definition}

The best reply process does not necessarily maximize the overall potential function of the game as it may result in a suboptimal Nash equilibrium. When $\tau=0$, the BLLL algorithm boils down to an asynchronous best reply process on the constrained action sets. A $\tau>0$ then allows each agent to occasionally select locally suboptimal moves, i.e. it selects an action that decreases its local utility with a non-zero probability. These occasional suboptimal moves are useful as they prevent the agents from converging to a suboptimal Nash equilibrium. 
The BLLL algorithm can then be thought of as a perturbation of the asynchronous best reply process, where the size of the perturbation is controlled by the temperature $\tau$. This idea is formalized in Section \ref{sec:BLLL_perturbed}.

\subsection{Resistance Trees}\label{sec:resistance}
In this part, we briefly review the concept of resistance trees, which we will use in our subsequent sections. We refer the readers to \cite{young_evolution} for a detailed discussion.

\subsubsection{Resistance Trees}\label{sec:resistance_trees}
Let $P^{0}$ be a stationary Markov chain defined on a state space $X$. We call this the \textit{unperturbed} process. The process $P^{\epsilon}$ is then called a \textit{regular perturbed Markov process} if it satisfies the following conditions:
\begin{enumerate}
\item $P^{\epsilon}$ is aperiodic and irreducible;
\item $\lim_{\epsilon \rightarrow 0} P^{\epsilon}(x \rightarrow y) = P^{0}(x \rightarrow y)$, $\forall \;x,y \in X$, where $P^{\epsilon}(x \rightarrow y)$ and $P^{0}(x \rightarrow y)$ denote the transition probabilities from state $x$ to $y$ of processes $P^{\epsilon}$ and $P^{0}$ respectively;
\item if $P^{\epsilon}(x \rightarrow y)>0$, for some $\epsilon >0$, then there exists some $R(x \rightarrow y) \geq 0$, such that $0<{\lim_{\epsilon \rightarrow 0}} \epsilon^{-R(x \rightarrow y)}P^{\epsilon}(x \rightarrow y) < \infty$, where we refer to $R(x \rightarrow y)$ as the resistance of the transition from state $x$ to $y$.
\end{enumerate}

Construct a tree $T$ with $\left\vert{X}\right\vert$ vertices, one for each state, rooted at some vertex $z$, such that there exists a unique directed path to $z$ from every other vertex. The weight of a directed edge from vertex $x$ to $y$ is given by the resistance $R(x \rightarrow y)$. Such a tree is called a resistance tree whose resistance is given by the sum of the $\left\vert{X}\right\vert - 1$ edges that compose it. Since $P^{\epsilon}$ is aperiodic and irreducible (the first condition of the regular perturbed Markov process), there exists a unique stationary distribution $\mu^{\epsilon}$ for a given $\epsilon$. Define $p^{\epsilon}_z = \sum_{T\in\mathcal{T}_z}\prod_{[x,y]\in T}P^{\epsilon}(x \rightarrow y)$, where $[x,y]$ is the directed edge from vertex $x$ to $y$ and $\mathcal{T}_{z}$ denotes the set of all the trees that are rooted at $z$. Then, we have
\begin{align}\label{eq:res-2}
\mu^{\epsilon}_z = \frac{p^{\epsilon}_z}{\sum_{z'\in X} {p^{\epsilon}_{z'}}},
\end{align}
where $\mu_{z}^{\epsilon}$ denotes the probability of state $z$ in the stationary distribution.

The stochastic potential of state $z$ is then defined as the minimum resistance among all the trees that are rooted at $z$:
\begin{equation}\label{eq:res-1}
\gamma(z) = \min_{T \in \mathcal{T}_{z}} \sum_{[x,y] \in T}R(x \rightarrow y),
\end{equation}

\begin{theorem} (see \cite{young_evolution})\label{thm:resistance-tree}
Let $P^{\epsilon}$ be a regular perturbed Markov process of $P^{0}$ and $\mu^{\epsilon}$ be its unique stationary distribution. Then
\begin{enumerate}
\item $\lim_{\epsilon \rightarrow 0}\mu^{\epsilon} = \mu^{0}$ exists,\footnote{The perturbations effectively select one of the stationary distributions of $P^0$.} where $\mu^{0}$ is a stationary distribution of $P^{0}$;
\item We have $\mu_{x}^{0}>0$ iff $\gamma(x) \leq \gamma(y)$, $\forall \; y \in X$, where $\mu_{x}^{0}$ denotes the probability of state $x$ in the stationary distribution $\mu^0$.
\end{enumerate}
\end{theorem}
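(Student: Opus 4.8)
The plan is to reduce the statement to the (classical) Markov chain tree theorem together with an order-of-magnitude analysis of the tree weights as $\epsilon\to 0$. First I would invoke the Markov chain tree theorem, which applies because condition~1 makes $P^\epsilon$ irreducible and hence endows it with a unique stationary distribution: for each fixed $\epsilon>0$,
\[
\mu^\epsilon_z=\frac{p^\epsilon_z}{\sum_{z'\in X}p^\epsilon_{z'}},\qquad p^\epsilon_z=\sum_{T\in\mathcal{T}_z}\prod_{[x,y]\in T}P^\epsilon(x\rightarrow y),
\]
which is exactly \eqref{eq:res-2}. This reduces the problem to understanding the behavior of each $p^\epsilon_z$ as $\epsilon\to 0$.

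Next I would use condition~3 to pin down that behavior. For every ordered pair $(x,y)$ with $P^\epsilon(x\rightarrow y)>0$ for all small $\epsilon$, condition~3 yields a constant $c_{xy}\in(0,\infty)$ with $P^\epsilon(x\rightarrow y)=c_{xy}\,\epsilon^{R(x\rightarrow y)}(1+o(1))$. Multiplying over the $|X|-1$ edges of a tree $T\in\mathcal{T}_z$, the weight of $T$ behaves like $\bigl(\prod_{[x,y]\in T}c_{xy}\bigr)\epsilon^{r(T)}(1+o(1))$, where $r(T)=\sum_{[x,y]\in T}R(x\rightarrow y)$. Summing over the finitely many trees in $\mathcal{T}_z$, the dominant contribution as $\epsilon\to 0$ is carried by the trees of minimum resistance, so $p^\epsilon_z=C_z\,\epsilon^{\gamma(z)}(1+o(1))$, with $\gamma(z)$ the stochastic potential from \eqref{eq:res-1} and $C_z=\sum_{T\in\mathcal{T}_z:\,r(T)=\gamma(z)}\prod_{[x,y]\in T}c_{xy}$ a strictly positive constant (a nonempty finite sum of strictly positive products; $\mathcal{T}_z$ is nonempty and all its trees have finite resistance because $P^\epsilon$ is irreducible).

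Then I would put $\gamma^\ast=\min_{z\in X}\gamma(z)$ and factor $\epsilon^{\gamma^\ast}$ out of numerator and denominator:
\[
\mu^\epsilon_z=\frac{C_z\,\epsilon^{\gamma(z)-\gamma^\ast}(1+o(1))}{\sum_{z'\in X}C_{z'}\,\epsilon^{\gamma(z')-\gamma^\ast}(1+o(1))}.
\]
As $\epsilon\to 0$ the denominator converges to $\sum_{z':\,\gamma(z')=\gamma^\ast}C_{z'}>0$, while the numerator converges to $C_z$ if $\gamma(z)=\gamma^\ast$ and to $0$ otherwise. Hence $\mu^0_z:=\lim_{\epsilon\to 0}\mu^\epsilon_z$ exists for every $z$, and $\mu^0_z>0$ if and only if $\gamma(z)=\gamma^\ast$, i.e.\ iff $\gamma(z)\le\gamma(y)$ for all $y\in X$; this is claim~2. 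For claim~1 it then remains to verify $\mu^0$ is stationary for $P^0$: each $\mu^\epsilon$ is a probability vector with $\mu^\epsilon P^\epsilon=\mu^\epsilon$, and letting $\epsilon\to 0$ using condition~2 ($P^\epsilon\to P^0$ entrywise) together with $\mu^\epsilon\to\mu^0$ gives $\mu^0 P^0=\mu^0$, with $\mu^0$ nonnegative and summing to one.

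The main obstacle is the order-of-magnitude bookkeeping in the second step: one must argue carefully that a finite sum of terms of the form (positive constant)$\cdot\epsilon^{r}(1+o(1))$ is asymptotically governed by the smallest exponent $r$, control the $(1+o(1))$ factors uniformly over the finitely many trees, and confirm the leading coefficients $C_z$ are strictly positive so no cancellation occurs. The remaining ingredients — the Markov chain tree theorem and the continuity argument for stationarity — are standard.
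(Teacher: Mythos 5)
Your proof is correct and is essentially the argument the paper itself relies on: the paper does not prove Theorem~\ref{thm:resistance-tree} but defers it to the cited reference of Young, and the machinery it displays immediately beforehand --- the Markov chain tree formula (\ref{eq:res-2}) and the stochastic potential (\ref{eq:res-1}) --- is exactly what your asymptotic expansion $p^{\epsilon}_z = C_z\,\epsilon^{\gamma(z)}(1+o(1))$ and the subsequent limit of the ratio use. The only point worth tightening is the parenthetical claim that ``all trees in $\mathcal{T}_z$ have finite resistance'': trees containing an edge with identically zero transition probability have zero weight and undefined (infinite) resistance, so they must be dropped from both the sum defining $p^{\epsilon}_z$ and the minimization defining $\gamma(z)$, with irreducibility guaranteeing that at least one feasible tree rooted at each $z$ remains --- which is all your positivity of $C_z$ actually needs.
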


Theorem \ref{thm:resistance-tree} shows that the stochastically stable states (the support of the stationary distribution $\mu^{0}$) are the states with the minimum stochastic potential, i.e. $\mu_{x}^{0}>0$ if and only if $x$ minimizes $\gamma(x)$. 

Informally, resistance (of a transition) is a measure of how difficult that transition is. The greater the resistance, the more difficult (less likely) the transition. So the resistance of a tree rooted at state $x$ is a measure of how difficult it is for other states to transit to $x$. Thus, a state with minimum stochastic potential is a state to which it is the easiest to get to (informally speaking) as compared to other states. We will utilize this metaphor of difficulty in Section \ref{sec:toy_game} when explaining some of our results.

\subsubsection{BLLL as a Regular Perturbed Markov process}\label{sec:BLLL_perturbed}
BLLL algorithm induces a regular perturbed Markov process with the unperturbed process corresponding to the asynchronous best reply process defined in Section \ref{sec:BLLL} \cite{marden_revisiting}. The probability of a feasible transition $a^{0} \rightarrow a^{1} = (a_{i}^{1},a^{0}_{-i})$ (where agent $i$ alters its action and $a^0,a^1 \in \mathcal{A}$) is then given by
\begin{equation}\label{eq:log-res-3}
P^{\epsilon}(a^{0} \rightarrow a^{1}) = {1 \over n|\mathcal{A}_{i}^{\text{cons}}(a_{i}^0)|}{\epsilon^{-U_{i}(a_{i}^{1},a_{-i}^{0})} \over { \epsilon^{-U_{i}({a}_{i}^{1},a_{-i}^{0})} + \epsilon^{-U_{i}({a}_{i}^{0},a_{-i}^{0})}}},
\end{equation}
where $ \epsilon = e^{-\frac{1}{\tau}}$. As shown in \cite{marden_revisiting}, the resistance of this transition is as follows:
\begin{equation}\label{eq:log-res-4}
R(a^{0} \rightarrow a^{1}) = V_{i}(a^{0},a^{1}) - U_{i}(a^{1}),
\end{equation}
where $V_{i}(a^{0},a^{1}) = \max \{ U_{i}(a^{0}),U_{i}(a^{1})\}$.

Based on the theory of resistance trees, it can be shown that only the action profiles that maximize the potential function have the minimum stochastic potential \cite{young_evolution}. This in turn means that the stochastically stable states of BLLL are the set of potential maximizers, as also stated in Theorem \ref{thm:blll}.

\subsection{Stochastic Communication Links}\label{sec:comm_links}

Most of the current research in the area of motion planning of multi-robot systems assumes over-simplified channel models. For instance, it is common to assume perfect links or links that are perfect within a certain radius. In reality, however, communication links are best modelled stochastically. More specifically, the received channel to noise ratio (CNR) is a multi-scale random process with three major components: distance-dependent path loss, shadowing and multipath fading \cite{goldsmith_wireless}. See Fig. \ref{fig:Allscales_blue_gray} for a real example.

In the current literature on potential games, it is assumed that each agent is connected to all the other agents that can impact its next step utility function for all the possible actions in its constrained set. If the wireless channel is modeled as a disk with a known radius, then this can be achieved by properly designing the constrained action set. However, in the case of realistic communication links, this is simply not the case.  More specifically, it is not possible for every agent to truly evaluate its utility function as other agents with whom it cannot communicate may be influencing it. Thus, realistic communication links have a considerable implication for distributed decision making using potential games. It is the goal of this paper to bring an understanding of their impact on BLLL and derive sufficient conditions (on link quality and temperature) to guarantee a target probability for the set of potential maximizers (in the stationary distribution) in the presence of stochastic links.

\begin{figure}[h]
\centering
\mbox{\epsfig{figure=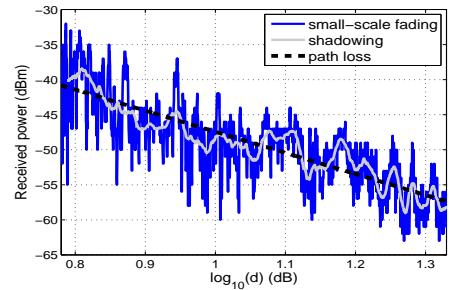,height=1.6in,width=2.6in}}
\vspace{-0.15in}
\caption{Underlying dynamics of the received signal power across an indoor route \cite{malmirchegini_spatial}.}
\vspace{-0.1in}
\label{fig:Allscales_blue_gray}
\end{figure}

\section{Impact of Stochastic Communication Links on Binary Log-Linear Learning}\label{sec:impact}

In this section, we characterize the impact of imperfect communication on the outcome of the BLLL algorithm. We first prove that given any arbitrarily-high probability of the set of potential maximizers, there exist a connectivity probability and temperature $\tau$ that can achieve it. We then give an illustrative example to provide a deeper understanding of our results.
\subsection{BLLL with Stochastic Communication Links}\label{sec:BLLL_links}
Consider the case where the communication graph among the agents is given by an undirected random graph $\mathcal{C}(a) = (\mathcal{I}, \mathcal{E}(a))$, where  $\mathcal{E}(a)$ denotes the set of edges, i.e.\ the communication links among the agents. Then, the probability of having a link (probability of connectivity) between agents $i$ and $j$ is given by $p_{\text{c},j,i}(a) = p_{\text{c},i,j}(a) = p_{\text{c}}(a_i, a_j)$, where we take $p_{\text{c},i,i}(a) = 1$, for all $a\in \mathcal{A}$ and $i\in\mathcal{I}$.\footnote{Notation not to be confused with $p_{z}^{\epsilon}$, which was used in (\ref{eq:res-2}).} Note that we have taken the probability of connectivity (and subsequently the communication graph) to be action dependent to make our analysis more general (which naturally implies a time-varying graph). For instance, when the action of an agent involves its position, then the probability of connectivity becomes a function of the action profile. We assume that the probability of connectivity of different links are independent of each other in this paper. We further assume that the communication graph is drawn independently in each iteration.

As mentioned in Section \ref{sec:BLLL}, in each iteration of the BLLL algorithm, an agent is chosen randomly (uniformly) to alter its action. Meanwhile, a realization of the communication graph is drawn from the random graph $\mathcal{C}(a)$. Let $\mathcal{I}_{\text{c},i}$ be the corresponding realization of the set of agents that agent $i$ can communicate with. The probability of realization $\mathcal{I}_{\text{c},i}$ is given by $p_{\text{c},i}(\mathcal{I}_{\text{c},i},a) = \prod_{j\notin\mathcal{I}_{\text{c},i}}(1-p_{\text{c},i,j}(a)) \prod_{j\in\mathcal{I}_{\text{c},i}} p_{\text{c},i,j}(a)$. Note that $\mathcal{I}_{\text{c},i} = \mathcal{I}$ corresponds to the case where agent $i$ can hear from all the other agents. Also, since the probability of connectivity is state-dependent, $p_{\text{c},i}(\mathcal{I}_{\text{c},i},a)$ is also a function of $a$.

The agent then has to assess its local utility and determine its action based on incomplete information. To represent this, we extend the definition of the utility function $U_{i}:\mathcal{A} \to \mathbb{R}$ such that it is well defined for all $U_{i}(a|\mathcal{I}_{\text{c},i})$, $\forall a \in \mathcal{A},\forall \mathcal{I}_{\text{c},i}$, where $U_{i}(a|\mathcal{I}_{\text{c},i})$ is the evaluated local utility function of agent $i$ given that it only communicates with agents in $\mathcal{I}_{\text{c},i}$. One possibility for evaluating $U_{i}(a|\mathcal{I}_{\text{c},i})$ is that the agent ignores the impact of agents not in $\mathcal{I}_{\text{c},i}$. Another possible strategy is for an agent to assume the last communicated action for the agents it is unable to communicate with.\footnote{However, evaluating which is a better strategy becomes case dependent and is an avenue for future work.}

In order to evaluate the impact of the stochastic communication links on the learning dynamics, we start with a temperature-dependent probability of connectivity of the form $p_{\text{c},i,j}(a) = {1\over 1+\epsilon^{m_{i,j}(a)}},\;\forall i,j\in \mathcal{I},\;\forall a\in \mathcal{A}$, where $m_{i,j}(a)>0$ is a constant. Based on our assumed form, we always have $p_{\text{c},i,j}(a) > 0.5$. Note that for $p_{\text{c},i,j}(a) = p_{\text{c}},\;\forall a\in \mathcal{A},\; \forall i,j \in \mathcal{I}$, $p_{\text{c}}$  need not have this temperature-dependent form, as we will show in our result (Theorem \ref{thm:blll-comm-2}). The probability of the transition $a^0 \rightarrow a^1 = (a^1_i, a^0_{-i})$ in the presence of stochastic communication links can then be characterized as follows:
\begin{align} 
 & P^{\epsilon}_c(a^{0} \rightarrow a^{1})  = \sum_{\mathcal{I}_{\text{c},i}}{p_{\text{c},i}(\mathcal{I}_{\text{c},i},a^0)P^{\epsilon}_c(a^{0} \rightarrow a^{1}|\mathcal{I}_{\text{c},i})} \nonumber \\
& = {1 \over n|\mathcal{A}_{i}^{\text{cons}}(a_{i}^0)|}\sum_{\mathcal{I}_{\text{c},i}}{p_{\text{c},i}(\mathcal{I}_{\text{c},i},a^0)\epsilon^{-U_{i}(a^{1}|{\mathcal{I}_{\text{c},i}})} \over { \epsilon^{-U_{i}({a}^{1}|{\mathcal{I}_{\text{c},i}})} + \epsilon^{-U_{i}({a}^{0}|{\mathcal{I}_{\text{c},i}})}}} \nonumber
\end{align}
\scriptsize
\vspace{-0.1in}
\begin{align*}
 =  {1 \over n|\mathcal{A}_{i}^{\text{cons}}(a_{i}^0)|}\sum_{\mathcal{I}_{\text{c},i}}\frac{\epsilon^{-U_{i}(a^{1}|{\mathcal{I}_{\text{c},i}}) + \sum_{j\notin \mathcal{I}_{\text{c},i}} m_{i,j}(a^0)}}{(\epsilon^{-U_{i}({a}^{1}|{\mathcal{I}_{\text{c},i}})} + \epsilon^{-U_{i}({a}^{0}|{\mathcal{I}_{\text{c},i}})})\prod_{j \in \mathcal{I}}(1+\epsilon^{m_{i,j}(a^0)})}.
\end{align*}
\normalsize
It can be seen that expressing the probability of connectivity in this fashion ensures that BLLL in the presence of stochastic communication links induces a regular perturbed Markov process with the unperturbed process as the asynchronous best reply process (Definition \ref{def:best-reply}).

\begin{figure*}[!htb]
\begin{equation}\label{eq:blll-alg-comm-2}
P^{\epsilon}_c(a^{0} \rightarrow a^{1})  = {1 \over n|\mathcal{A}_{i}^{\text{cons}}(a_{i}^0)|}\sum_{\mathcal{I}_{\text{c},i}}\frac{\epsilon^{V_{i}({a}^{0}, {a}^{1}|{\mathcal{I}_{\text{c},i}}) - U_i({a}^{1}|{\mathcal{I}_{\text{c},i}}) + \sum_{j\notin \mathcal{I}_{\text{c},i}} m_{i,j}(a^0)}}{\left(\epsilon^{V_{i}({a}^{0}, {a}^{1}|{\mathcal{I}_{\text{c},i}})-U_i({a}^{1}|{\mathcal{I}_{\text{c},i}})} + \epsilon^{V_{i}({a}^{0}, {a}^{1}|{\mathcal{I}_{\text{c},i}}) -U_i({a}^{0}|{\mathcal{I}_{\text{c},i}})}\right) \prod_{j \in \mathcal{I}}\left(1+\epsilon^{m_{i,j}(a^0)}\right)}.
\end{equation}
\hrulefill
\end{figure*}

We can further show that the equation above can be expressed as shown in (\ref{eq:blll-alg-comm-2}) on top of the next page, which results in the following expression for the resistance of this transition:
\begin{align}
& R_{c}(a^{0} \rightarrow a^{1}) = \min_{\mathcal{I}_{\text{c},i}} \left\{ R_{c}(a^{0} \rightarrow a^{1}| \mathcal{I}_{\text{c},i}) + \sum_{j\notin \mathcal{I}_{\text{c},i}} m_{i,j}(a^0)\right\},\label{eq:res_comm}
\end{align}
where $R_{c}(a^{0} \rightarrow a^{1}| \mathcal{I}_{\text{c},i}) = V_{i}(a^{0},a^{1}|{\mathcal{I}_{\text{c},i}})  - U_{i}({a}^{1}|{\mathcal{I}_{\text{c},i}})$ and $V_{i}({a}^{0}, {a}^{1}|{\mathcal{I}_{\text{c},i}}) = \max\{U_i({a}^{0}|{\mathcal{I}_{\text{c},i}}),$ $U_i({a}^{1}|{\mathcal{I}_{\text{c},i}})\}$. Note that $R_{c}(a^0\to a^1|\mathcal{I}) = R(a^0\to a^1)$, where $R(a^0\to a^1)$ is the resistance in case of perfect communication (see (\ref{eq:log-res-4})).

It can be seen that imperfect communication affects the transition probability from $a^0 \rightarrow a^1$, and as a result, affects its resistance. This means that the stochastically stable states may change. Hence, the outcome of the game may be significantly different as compared to the case of perfect communication.

\begin{lemma} \label{lem:suff-bound}
Consider a potential game where the agents employ BLLL algorithm in the presence of stochastic communication links. Furthermore, consider constrained action sets that satisfy the reachability and reversibility properties. The states with the minimum stochastic potential are the set of potential maximizers if we have the following,
\begin{align}
\sum_{j\notin\mathcal{I}_{\text{c},i}} m_{i,j}(a^0)\geq
R(a^{0} \rightarrow a^{1}) -R_{c}(a^{0} \rightarrow a^{1}| \mathcal{I}_{\text{c},i}),\label{eq:suff-bound-1}
\end{align}
for every agent $i\in\mathcal{I}$, all $\mathcal{I}_{\text{c},i}$ and all $a^{0} \rightarrow a^{1} = (a_{i}^1,a_{-i}^0)$, where $R(a^{0} \rightarrow a^{1})$ is the resistance for the case of perfect communication (see (\ref{eq:log-res-4})).
\end{lemma}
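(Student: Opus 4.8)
The plan is to show that condition \eqref{eq:suff-bound-1} forces the stochastic potential of every non-potential-maximizing state to be strictly larger than that of the potential maximizers, by comparing resistance trees of the imperfect-communication process $P^\epsilon_c$ against those of the ideal process $P^\epsilon$ whose behavior is already characterized by Theorem \ref{thm:blll} and the resistance-tree machinery (Theorem \ref{thm:resistance-tree}). First I would observe, using \eqref{eq:res_comm}, that for \emph{every} feasible transition $a^0\to a^1$ we have the inequality
\begin{align*}
R_c(a^0\to a^1) \;=\; \min_{\mathcal{I}_{\text{c},i}}\Big\{ R_c(a^0\to a^1\,|\,\mathcal{I}_{\text{c},i}) + \textstyle\sum_{j\notin\mathcal{I}_{\text{c},i}} m_{i,j}(a^0)\Big\} \;\geq\; R(a^0\to a^1),
\end{align*}
which is exactly what \eqref{eq:suff-bound-1} delivers once we recall that the term $\mathcal{I}_{\text{c},i}=\mathcal{I}$ contributes $R_c(a^0\to a^1\,|\,\mathcal{I}) = R(a^0\to a^1)$ to the minimum, so the minimum is also $\leq R(a^0\to a^1)$; hence in fact $R_c(a^0\to a^1) = R(a^0\to a^1)$ for all feasible transitions whenever \eqref{eq:suff-bound-1} holds. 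This is the crux: the condition is precisely tight enough to make the perturbed resistances of the stochastic-link process coincide with those of the ideal process on every edge.

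Next I would invoke the resistance-tree formula \eqref{eq:res-1}: the stochastic potential $\gamma_c(z)$ of a state $z$ under $P^\epsilon_c$ is the minimum over all trees rooted at $z$ of the sum of edge resistances $R_c(x\to y)$. Since $R_c(x\to y) = R(x\to y)$ edge-by-edge (and the set of feasible transitions is unchanged, because reachability/reversibility of the constrained action sets is assumed and the communication realization $\mathcal{I}_{\text{c},i}=\mathcal{I}$ always has positive probability, so every ideal transition remains feasible), the two processes have identical resistance trees with identical weights. Therefore $\gamma_c(z) = \gamma(z)$ for every state $z$, where $\gamma$ is the stochastic potential of the ideal BLLL process. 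By the analysis recalled in Section \ref{sec:BLLL_perturbed} (and Theorem \ref{thm:blll}), the states minimizing $\gamma$ are exactly the potential maximizers; hence the states minimizing $\gamma_c$ are exactly the potential maximizers, which by Theorem \ref{thm:resistance-tree}(2) are precisely the states with minimum stochastic potential under $P^\epsilon_c$. This proves the claim.

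I would also need to check two minor technical points so the argument is airtight. First, that $P^\epsilon_c$ is genuinely a regular perturbed Markov process of the asynchronous best-reply process — this is asserted in the text following \eqref{eq:blll-alg-comm-2}, and the explicit form \eqref{eq:blll-alg-comm-2} exhibits each transition probability as $c\cdot\epsilon^{R_c} (1+o(1))$ with $R_c$ finite, while aperiodicity and irreducibility follow from reachability/reversibility plus the strictly positive self-loop probabilities; so this is already in hand. Second, I must confirm the unperturbed limit is unchanged, i.e. that as $\epsilon\to 0$ the dominant term in each sum over $\mathcal{I}_{\text{c},i}$ in \eqref{eq:blll-alg-comm-2} does not alter the best-reply limit — but since every $m_{i,j}(a^0)>0$, the terms with $\mathcal{I}_{\text{c},i}\neq\mathcal{I}$ carry strictly positive extra powers of $\epsilon$ in the numerator relative to $\mathcal{I}_{\text{c},i}=\mathcal{I}$, so in the limit only the full-information term survives and $P^0_c = P^0$.

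The only real obstacle I anticipate is bookkeeping in the reduction $R_c(a^0\to a^1) = R(a^0\to a^1)$: one must be careful that the $\min$ in \eqref{eq:res_comm} is attained (finitely many realizations $\mathcal{I}_{\text{c},i}$, so fine) and that $R_c(a^0\to a^1\,|\,\mathcal{I}) = R(a^0\to a^1)$ is used with the \emph{correct} utility, namely $U_i(\cdot\,|\,\mathcal{I}) = U_i(\cdot)$, which is the consistency requirement built into the extended utility definition. Once that identification is made, the rest is a direct transfer of the resistance-tree conclusion from the ideal case; no new combinatorial estimate on trees is required because the weights are literally equal.
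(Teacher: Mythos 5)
Your proof is correct and follows essentially the same route as the paper: you rearrange condition \eqref{eq:suff-bound-1} to show every term in the minimum of \eqref{eq:res_comm} is at least $R(a^0\to a^1)$ while the $\mathcal{I}_{\text{c},i}=\mathcal{I}$ term equals it, conclude $R_c(a^0\to a^1)=R(a^0\to a^1)$ edge-by-edge, and then transfer the resistance-tree characterization of the potential maximizers from the perfect-communication case (the paper delegates this last step to Lemma 5.2 and Theorem 5.1 of \cite{marden_revisiting}). Your additional verifications of the regular-perturbed-process conditions and the unperturbed limit are details the paper leaves implicit but are consistent with its argument.
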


\begin{proof}
If the conditions in the lemma hold, then the resistance of the transition from  $a^{0}$ to $a^{1} = (a^1_i, a^0_{-i})$, for some agent $i$, becomes $R_{c}(a^{0} \rightarrow a^{1}) = R(a^{0} \rightarrow a^{1})$. Therefore, the resistances of the transitions do not change as compared to the case of perfect communication. The proof of the lemma then follows immediately from Lemma 5.2 and Theorem 5.1 in \cite{marden_revisiting}.
\end{proof}

\begin{remark}\label{remark:suff-bound}
A good choice of the constants $\{m_{i,j}(a)\}_{i,j\in \mathcal{I},\;a \in \mathcal{A}}$, is such that they satisfy $m_{i,j}(a) \geq \max_{a^{0}\rightarrow a^{1}}\{R(a^{0} \rightarrow a^{1})\}$. This has the advantage that there are separate conditions for each $m_{i,j}(a)$ and that they are not dependent on how communication failures affect the game.
\end{remark}

\begin{remark}
Lemma \ref{lem:suff-bound} provides sufficient conditions to guarantee that the states with the minimum stochastic potential are still the set of potential maximizers. Equation (\ref{eq:suff-bound-1}) can be more explicitly expressed as a function of connectivity as follows:
\begin{align}
\sum_{j\notin\mathcal{I}_{\text{c},i}} m_{i,j}(a^0) & = \sum_{j\notin\mathcal{I}_{\text{c},i}} \log_{\epsilon}(\epsilon^{m_{i,j}(a^0)})\nonumber\\
& = \sum_{j\notin\mathcal{I}_{\text{c},i}} \log_{\epsilon}{1-p_{\text{c},i,j}(a^0) \over p_{\text{c},i,j}(a^0)}\nonumber\\
& \geq R(a^{0} \rightarrow a^{1}) -R_{c}(a^{0} \rightarrow a^{1}| \mathcal{I}_{\text{c},i}),\nonumber
\end{align}
for all $\mathcal{I}_{\text{c},i}$ and all $a^{0} \rightarrow a^{1}$. We can see that $\log_{\epsilon}{1-p_{\text{c},i,j}(a) \over p_{\text{c},i,j}(a)}$ is an important parameter (always positive). Furthermore, if $R_{c}(a^{0} \rightarrow a^{1}| \mathcal{I}_{\text{c},i}) \geq R(a^{0} \rightarrow a^{1}) $, then connectivity  to agent $i$ is not important, since the condition is always satisfied.

The following theorem shows that we can find some $\tau>0$ and $p_{\text{c},i,j}(a)<1$ to guarantee that the probability of the set of potential maximizers in the stationary distribution is larger than or equal to some required threshold.
\end{remark}

\begin{theorem}\label{thm:blll-comm}
Consider a potential game where the agents employ BLLL algorithm in the presence of stochastic communication links. Furthermore, consider constrained action sets that satisfy the reachability and reversibility properties. For any given $p_{\text{tar}} < 1$, there exists a $\tau_{\text{th}} > 0$, such that the probability of the set of potential maximizers in the stationary distribution is larger than or equal to $p_{\text{tar}}$, if $0<\tau \leq \tau_{\text{th}}$ and $p_{\text{c},i,j}(a) =  {1\over 1+\epsilon^{m_{i,j}(a)}},\; \forall a \in \mathcal{A},\; \forall i,j \in \mathcal{I}$, where $\{m_{i,j}(a)\}_{i,j\in \mathcal{I},\;a \in \mathcal{A}}$ are constants satisfying Lemma \ref{lem:suff-bound}.
\end{theorem}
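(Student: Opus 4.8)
The plan is to combine Lemma~\ref{lem:suff-bound} (which guarantees that, under the stated form of $p_{\text{c},i,j}(a)$, the set of potential maximizers $\mathcal{A}^*$ coincides with the stochastically stable states of the perturbed process) with the resistance-tree machinery of Theorem~\ref{thm:resistance-tree} and a quantitative estimate of the stationary probability of $\mathcal{A}^*$ as a function of $\epsilon = e^{-1/\tau}$. First I would invoke Lemma~\ref{lem:suff-bound}: since $\{m_{i,j}(a)\}$ are chosen to satisfy \eqref{eq:suff-bound-1}, the resistances $R_c(a^0\to a^1)$ of the communication-impaired process equal the perfect-communication resistances $R(a^0\to a^1)$, and hence the stochastic potentials $\gamma(z)$ are exactly those of ordinary BLLL. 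By Theorem~\ref{thm:blll} (equivalently, the resistance-tree argument in Section~\ref{sec:BLLL_perturbed}), the set of states with minimum stochastic potential is precisely $\mathcal{A}^*$, so by Theorem~\ref{thm:resistance-tree}, $\mu^0$ is supported on $\mathcal{A}^*$ and $\lim_{\epsilon\to 0}\mu^\epsilon = \mu^0$.

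Next I would make this limit quantitative. Using the representation \eqref{eq:res-2}, write the stationary probability of the maximizer set as
\begin{align*}
\mu^\epsilon(\mathcal{A}^*) = \frac{\sum_{z\in\mathcal{A}^*} p_z^\epsilon}{\sum_{z'\in\mathcal{A}} p_{z'}^\epsilon},
\end{align*}
and recall that each $p_z^\epsilon = \sum_{T\in\mathcal{T}_z}\prod_{[x,y]\in T}P_c^\epsilon(x\to y)$ behaves, as $\epsilon\to 0$, like $C_z\,\epsilon^{\gamma(z)}(1+o(1))$ for a positive constant $C_z$ (this is the content of the regular-perturbed-process conditions, item~3 of the definition in Section~\ref{sec:resistance_trees}, applied edge-by-edge and summed over trees — the leading-order term comes from the minimum-resistance tree). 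Let $\gamma^* = \min_z \gamma(z)$, attained exactly on $\mathcal{A}^*$, and let $\delta > 0$ be the resistance gap between $\gamma^*$ and the next smallest stochastic potential. Then the numerator is $\Theta(\epsilon^{\gamma^*})$ and the denominator is $\Theta(\epsilon^{\gamma^*}) + O(\epsilon^{\gamma^*+\delta})$, so
\begin{align*}
1 - \mu^\epsilon(\mathcal{A}^*) = O(\epsilon^{\delta}) \longrightarrow 0 \quad \text{as } \epsilon\to 0.
\end{align*}
Consequently, for any $p_{\text{tar}} < 1$ there is an $\epsilon_{\text{th}} > 0$ with $\mu^\epsilon(\mathcal{A}^*) \geq p_{\text{tar}}$ whenever $0 < \epsilon \leq \epsilon_{\text{th}}$; translating back through $\epsilon = e^{-1/\tau}$, which is increasing in $\tau$, gives the threshold $\tau_{\text{th}} = -1/\log\epsilon_{\text{th}} > 0$ and the claim: $\mu^\epsilon(\mathcal{A}^*)\geq p_{\text{tar}}$ for all $0 < \tau \leq \tau_{\text{th}}$.

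One subtlety to be careful about: the perturbed transition probabilities $P_c^\epsilon(a^0\to a^1)$ given by \eqref{eq:blll-alg-comm-2} are themselves $\epsilon$-dependent in a more intricate way than a single monomial (they are sums over $\mathcal{I}_{\text{c},i}$ of ratios, times the factor $\prod_j(1+\epsilon^{m_{i,j}(a^0)})^{-1}$), so I must confirm that they still satisfy condition~3 of a regular perturbed Markov process with resistance exactly $R_c(a^0\to a^1)$ as claimed in \eqref{eq:res_comm} — i.e.\ that $\lim_{\epsilon\to0}\epsilon^{-R_c(a^0\to a^1)}P_c^\epsilon(a^0\to a^1)$ exists and is finite and nonzero. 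This follows because $\prod_j(1+\epsilon^{m_{i,j}})\to 1$ and the dominant term in the sum over $\mathcal{I}_{\text{c},i}$ is the one achieving the minimum in \eqref{eq:res_comm}; the remark in Section~\ref{sec:BLLL_links} already asserts this, so I would simply cite it. Given that, the main obstacle is really bookkeeping: establishing the clean $\Theta(\epsilon^{\gamma(z)})$ asymptotics for $p_z^\epsilon$ uniformly over all states and all rooted trees, and arguing that the resistance gap $\delta$ is strictly positive (which holds since $\mathcal{A}$ is finite and the $\gamma(z)$ take finitely many values). Everything else is a direct consequence of Lemma~\ref{lem:suff-bound} and Theorems~\ref{thm:blll} and~\ref{thm:resistance-tree}.
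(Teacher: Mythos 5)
Your proposal is correct and follows essentially the same route as the paper: invoke Lemma~\ref{lem:suff-bound} to conclude that the stochastic potentials are unchanged and hence the minimizers are exactly the potential maximizers, then use the convergence $\mu^{\epsilon}\to\mu^{0}$ from Theorem~\ref{thm:resistance-tree} to extract a threshold temperature. The only difference is that the paper stops at the qualitative limit (choosing $\tau_x$ so that each $\mu_x^{\epsilon}\leq \frac{1-p_{\text{tar}}}{|\mathcal{A}\setminus\mathcal{A}^*|}$ for $x\notin\mathcal{A}^*$ and taking the minimum), whereas you additionally derive the quantitative rate $1-\mu^{\epsilon}(\mathcal{A}^*)=O(\epsilon^{\delta})$, which is a correct but unneeded refinement.
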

\begin{proof}
We construct temperature-dependent probabilities of connectivity, as discussed in Section \ref{sec:BLLL_links}, such that the constants $\{m_{i,j}(a)\}_{i,j\in \mathcal{I},\;a \in \mathcal{A}}$ satisfy Lemma \ref{lem:suff-bound}. Then, we have a regular perturbed Markov process, and the states with the minimum stochastic potential are still the set of potential maximizers. 

From Theorem \ref{thm:resistance-tree}, we have $\mu_x^0=\lim_{\tau \to 0}\mu_{x}^{\epsilon} = 0$, $\forall x\notin \mathcal{A}^{*}$, where $\mathcal{A}^{*} \subseteq \mathcal{A}$ is the set of potential maximizers. Thus, we know that, for any $ x \notin \mathcal{A}^{*}$, there exists a $\tau_{x}>0$ such that $\mu_{x}^{\epsilon}\leq{1-p_{\text{tar}}\over|\mathcal{A}\setminus\mathcal{A}^{*}|} $, if $0 < \tau\leq \tau_{x}$. Hence, we have
\begin{align*}
\sum_{a\in \mathcal{A}^{*}}\mu_{a}^{\epsilon} & = 1 - \sum_{a\notin \mathcal{A}^{*}}\mu_{a}^{\epsilon} \geq p_{\text{tar}},
\end{align*}
if $0< \tau \leq \tau_{\text{th}} = \min_{x\notin \mathcal{A}^{*}}\tau_{x}$.
\end{proof}

The following theorem shows that we can find a sufficient lower bound on the probability of connectivity, to ensure that the probability of the set of potential maximizers is larger than or equal to some required threshold, for the special case of probabilities of connectivity which are state independent and equal for all links, i.e., $p_{\text{c},i,j}(a) = p_{\text{c}}, \; \forall a \in \mathcal{A}, \; \forall i,j \in \mathcal{I}$.

\begin{theorem}\label{thm:blll-comm-2}
Consider a potential game where the agents employ BLLL algorithm in the presence of stochastic communication links with probabilities of connectivity that are state independent and equal for all links, i.e., $p_{\text{c},i,j}(a) = p_{\text{c}}, \; \forall a \in \mathcal{A}, \; \forall i,j \in \mathcal{I}$. Furthermore, consider constrained action sets that satisfy the reachability and reversibility properties. For any given $p_{\text{tar}} < 1$, there exists a $p_{\text{c},\text{th}} < 1$, such that the probability of the set of potential maximizers in the stationary distribution is larger than or equal to $p_{\text{tar}}$, if $p_{\text{c}} \geq p_{\text{c},\text{th}}$, where $\tau = {- m\over \ln\left({1-p_{\text{c}}\over p_{\text{c}}}\right)}$, with $m$ representing a constant that satisfies Lemma \ref{lem:suff-bound}.
\end{theorem}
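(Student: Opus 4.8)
The plan is to recognize Theorem \ref{thm:blll-comm-2} as the specialization of Theorem \ref{thm:blll-comm} to a single constant $m_{i,j}(a)\equiv m$, after a change of variables that trades the temperature $\tau$ for the connectivity $p_{\text{c}}$. First I would observe that the stated relation $\tau = -m/\ln\!\big((1-p_{\text{c}})/p_{\text{c}}\big)$ is, via $\epsilon = e^{-1/\tau}$, exactly equivalent to $\epsilon^{m} = (1-p_{\text{c}})/p_{\text{c}}$, i.e.\ $p_{\text{c}} = 1/(1+\epsilon^{m})$. Hence coupling $p_{\text{c}}$ and $\tau$ in this way is precisely the temperature-dependent form of the probability of connectivity from Section \ref{sec:BLLL_links}, with $m_{i,j}(a)=m$ for every $i,j\in\mathcal{I}$ and $a\in\mathcal{A}$. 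By Remark \ref{remark:suff-bound}, taking $m \geq \max_{a^0\to a^1} R(a^0\to a^1)$ makes this constant satisfy Lemma \ref{lem:suff-bound}: since $R_{c}(a^{0}\to a^{1}|\mathcal{I}_{\text{c},i}) = V_i - U_i(a^1|\mathcal{I}_{\text{c},i}) \geq 0$ always, the left side of (\ref{eq:suff-bound-1}) is either an empty sum (when $\mathcal{I}_{\text{c},i}=\mathcal{I}$, in which case the right side is $0$) or at least $m$, so the inequality holds in all cases. Therefore BLLL with these links induces a regular perturbed Markov process whose stochastically stable states are still the set of potential maximizers $\mathcal{A}^{*}$.

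Second, I would invoke Theorem \ref{thm:blll-comm}: because $m$ satisfies Lemma \ref{lem:suff-bound}, there exists $\tau_{\text{th}}>0$ with $\sum_{a\in\mathcal{A}^{*}}\mu_a^{\epsilon}\geq p_{\text{tar}}$ whenever $0<\tau\leq\tau_{\text{th}}$. It then remains to convert the condition $\tau\leq\tau_{\text{th}}$ into a condition on $p_{\text{c}}$. On the relevant range $p_{\text{c}}\in(1/2,1)$, the map $p_{\text{c}}\mapsto \tau = -m/\ln\!\big((1-p_{\text{c}})/p_{\text{c}}\big)$ is continuous and strictly decreasing: as $p_{\text{c}}\uparrow 1$ we have $(1-p_{\text{c}})/p_{\text{c}}\downarrow 0$, so $\ln(\cdot)\to-\infty$ and $\tau\downarrow 0$, while as $p_{\text{c}}\downarrow 1/2$ we get $\tau\uparrow\infty$. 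Inverting, $\tau\leq\tau_{\text{th}}$ is equivalent to $p_{\text{c}}\geq p_{\text{c},\text{th}} := 1/\big(1+e^{-m/\tau_{\text{th}}}\big)$, and since $\tau_{\text{th}}>0$ we have $e^{-m/\tau_{\text{th}}}>0$, hence $p_{\text{c},\text{th}}<1$, which is exactly the claim.

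I do not anticipate a substantive obstacle: the content lies entirely in the reparametrization and in the monotone correspondence $p_{\text{c}}\leftrightarrow\tau$ it induces. The two points requiring care are (i) confirming that the single constant $m$ can be chosen to satisfy Lemma \ref{lem:suff-bound} in a way that does not depend on $\tau$, which is precisely what Remark \ref{remark:suff-bound} supplies, and (ii) checking that the resulting threshold $p_{\text{c},\text{th}}$ is \emph{strictly} below $1$, so that genuinely lossy links suffice; this follows from the strict positivity of $\tau_{\text{th}}$ guaranteed by Theorem \ref{thm:blll-comm}.
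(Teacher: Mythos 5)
Your proposal is correct and follows essentially the same route as the paper's proof: identify the coupling $\tau = -m/\ln\bigl((1-p_{\text{c}})/p_{\text{c}}\bigr)$ with the temperature-dependent connectivity $p_{\text{c}} = 1/(1+\epsilon^{m})$, invoke Theorem \ref{thm:blll-comm} to obtain $\tau_{\text{th}}$, and set $p_{\text{c},\text{th}} = 1/\bigl(1+e^{-m/\tau_{\text{th}}}\bigr)$ using the monotone correspondence between $p_{\text{c}}$ and $\tau$. Your added verification that $m \geq \max_{a^0\to a^1} R(a^0\to a^1)$ satisfies Lemma \ref{lem:suff-bound} is a harmless elaboration of Remark \ref{remark:suff-bound} that the paper leaves implicit.
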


\begin{proof}
We construct a temperature-dependent probability of connectivity, as discussed in Section \ref{sec:BLLL_links}, such that the constant $m$ satisfies Lemma \ref{lem:suff-bound}, in order to establish a $p_{\text{c},\text{th}}$.
From Theorem \ref{thm:blll-comm}, we know that there exists a $\tau_{\text{th}}>0$, such that probability of convergence to the set of potential maximizers is larger than or equal to $p_{\text{tar}}$, if $0<\tau\leq\tau_{\text{th}}$.  

Consider $p_{\text{c}} \geq p_{\text{c},\text{th}} = {1\over 1+e^{-m\over\tau_{\text{th}}}}$. Then, the temperature associated with this probability of connectivity is $\tau = {- m\over \ln\left({1-p_{\text{c}}\over p_{\text{c}}}\right)} = \tau_{\text{th}}{\ln\left({1-p_{\text{c},\text{th}}\over p_{\text{c},\text{th}}}\right)\over\ln\left({1-p_{\text{c}}\over p_{\text{c}}}\right)}\leq\tau_{\text{th}}$. Thus, the probability of the set of potential maximizers is larger than or equal to $p_{\text{tar}}$, if $p_{\text{c}} \geq p_{\text{c},\text{th}} $ with $\tau = {- m\over \ln\left({1-p_{\text{c}}\over p_{\text{c}}}\right)}$.
\end{proof}

Recall that the BLLL algorithm allows perturbations from the asynchronous best reply process, so as to allow agents to intentionally choose locally sub-optimum actions with a small probability. However, in the imperfect communication case, the failure of communication links also causes the agents to unintentionally make sub-optimum decisions. Given some $p_{\text{tar}}$, our given sufficient conditions then aim to restrict how often the unintentional mistakes can be made. One observation in the proof of Theorem \ref{thm:blll-comm} is that in general, $\tau_{\text{th}}$ and $p_{\text{c},i,j}(a)$ are decreasing and increasing functions of $p_{\text{tar}}$ respectively, i.e.\ if higher $p_{\text{tar}}$ is required, then smaller temperature and better connectivity are needed. This is because when $p_{\text{tar}}$ is higher, then less perturbation from the asynchronous best reply process is allowed. Hence, the agents have to assess their local utilities more accurately, which requires better connectivity.

Theorem \ref{thm:blll-comm} can be easily extended to the case of Log-Linear Learning. We skip the details for brevity.
\subsection{An illustrative example}\label{sec:toy_game}
In this part, we provide an example to have a better understanding on the impact of imperfect communication and the results of Theorem \ref{thm:blll-comm}. Consider a 2-agent game where the action sets of the players are given by $ \mathcal{A}_{1} = \{\text{T},\text{B}\} $ (top, bottom) and $ \mathcal{A}_{2} = \{\text{L},\text{R}\}$ (left, right). The utility function $U_{i}:\mathcal{A}=\mathcal{A}_{1} \times \mathcal{A}_{2} \to \mathbb{R}$  is given by
\begin{center}
\begin{tabular}{ r|c|c| }
\multicolumn{1}{l}{$U_{1}$}
 &  \multicolumn{1}{c}{$\text{L}$}
 & \multicolumn{1}{c}{$\text{R}$} \\
\cline{2-3}
$\text{T}$ & 1 & 3 \\
\cline{2-3}
$\text{B}$ & 3 & 1 \\
\cline{2-3}
\end{tabular}     \hspace{0.4in}                 \begin{tabular}{ r|c|c| }
\multicolumn{1}{l}{$U_{2}$}
 &  \multicolumn{1}{c}{$\text{L}$}
 & \multicolumn{1}{c}{$\text{R}$} \\
\cline{2-3}
$\text{T}$ & 1 & 2 \\
\cline{2-3}
$\text{B}$ & 4 & 1 \\
\cline{2-3}
\end{tabular}
\end{center}
This is a potential game with the following potential function $\phi:\mathcal{A} \rightarrow \mathbb{R}$:
\begin{center}
\begin{tabular}{ r|c|c|c| }
\multicolumn{1}{l}{}
 &  \multicolumn{1}{c}{$U_{1}$}
 & \multicolumn{1}{c}{$U_{2}$}
& \multicolumn{1}{c}{$\phi$}\\
\cline{2-4}
$a^1 = (\text{B},\text{R})$ & 1 & 1 & 1 \\
\cline{2-4}
$a^2 = (\text{T},\text{L})$ & 1 & 1 & 2\\
\cline{2-4}
$a^3 = (\text{T},\text{R})$& 3 & 2 & 3\\
\cline{2-4}
$a^4 = (\text{B},\text{L})$ & 3 & 4 & 4\\
\cline{2-4}
\end{tabular}
\end{center}

For the case where the two nodes cannot communicate, we take $U_{1}(a|\{1\}) = \left\{\begin{array}{ll}3 & \text{if }a_1 = \text{T}\\ 1 & \text{if }a_1 = \text{B}\end{array}\right.$ and $U_{2}(a|\{2\}) = \left\{\begin{array}{ll}1 & \text{if }a_2 = \text{L}\\ 2 & \text{if }a_2 = \text{R}\end{array}\right.$.


We next find the stochastic potential for each action profile, i.e.\ the minimum resistance of the tree rooted at each action profile, by using (\ref{eq:res-1}). For the case of perfect communication, the (only) state with minimum stochastic potential is $a^4$. For the case of imperfect communication, we consider the scenario where the probability of connectivity is state-independent, i.e., $p_{\text{c}}=p_{\text{c},1,2} = p_{\text{c},2,1}$. Let $m = m_{1,2}=m_{2,1}=\log_{\epsilon}{1-p_{\text{c}} \over p_{\text{c}}}$. By constructing all the resistance trees and minimizing (\ref{eq:res-1}), we find that $a^4$ still has the minimum stochastic potential if and only if $m = \log_{\epsilon}{1-p_{\text{c}} \over p_{\text{c}}}>1$. From (\ref{eq:suff-bound-1}) in Lemma \ref{lem:suff-bound}, our derived sufficient condition for the probability of connectivity to guarantee that $a^4$ still has the minimum stochastic potential can be found as $m = \log_{\epsilon}{1-p_{\text{c}} \over p_{\text{c}}}\geq 3$.

\begin{figure}[h]
\centering
\mbox{\epsfig{figure=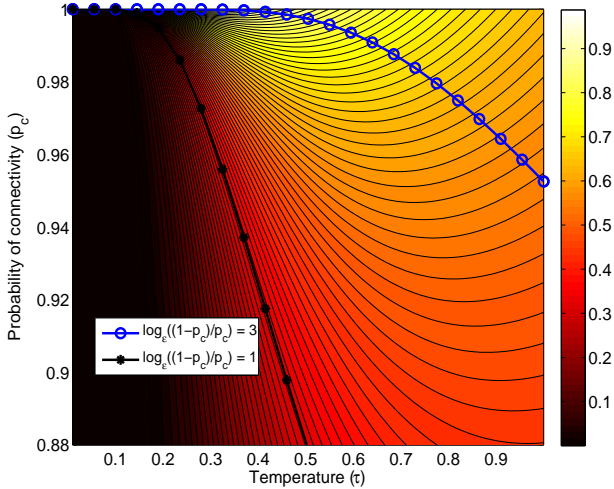,height=2.8in,width=3.6in}}
\vspace{-0.3in}
\caption{The color map corresponds to the probability of the potential maximizer as a function of $\tau$ and $p_{\text{c}}$. The blue line with empty circle markers shows $p_{\text{c}} = {1 \over 1+e^{-3/\tau}}$, while the black line with filled circle markers shows $p_{\text{c}} =  {1 \over 1+e^{-1/\tau}}$.}
\vspace{-0.1in}
\label{fig:var_temp_m}
\end{figure}

By evaluating (\ref{eq:res-2}), we calculate the stationary distribution for the case of imperfect communication to see how the probability of the potential maximizer ($a^4$) changes as a function of $p_{\text{c}}$ and $\tau$, as shown in Fig.\ \ref{fig:var_temp_m}. The blue line with empty circle markers in the figure represents the curve $p_{\text{c}} = {1 \over 1+e^{-3/\tau}}$ ($m=3$), while the black line with filled circle markers represents $p_{\text{c}} =  {1 \over 1+e^{-1/\tau}}$ ($m=1$). It can be seen that given any $p_{\text{tar}}$, the required probability of the potential maximizer can always be achieved by choosing a fixed $m>1$ and finding some appropriate $\tau$ and $p_{\text{c}}$. (Theorem \ref{thm:blll-comm} shows a sufficient condition for this, where $m\geq 3$). Informally, as discussed in Section \ref{sec:resistance}, this is because the state with the minimum stochastic potential (in this case, the potential maximizer $a^4$) is the easiest to transit to. The curve $m=1$, i.e., $p_{\text{c}} =  {1 \over 1+e^{-1/\tau}}$ could be thought of as a transition curve, as curves above it can achieve any $p_{\text{tar}}$, while those below it cannot. This is due to the fact that above this curve $a^4$ is the only state with minimum stochastic potential, while, on this curve, the states with minimum stochastic potential are $a^4$ and $a^3$. On the other hand, below this curve, $a^3$ is the (only) state with minimum stochastic potential. Informally, this means that the potential maximizer, $a^4$, becomes more difficult to transit to as compared to $a^3$ when $m < 1$. In fact, for $m<1$, the probability of the potential maximizer $a^4$ becomes arbitrarily small as $\tau \to 0$. Note that we have plotted the y-axis only down to 0.88 for better visibility. Also, note that a case of connectivity of 96\%, for instance, means that the packets are dropped 4\% of the time, which is a typical value for several scenarios\cite{boyce1998packet}.

\begin{figure}[h]
\centering
\mbox{\epsfig{figure=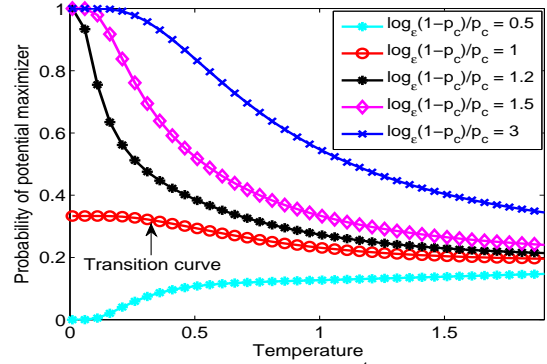,height=2in,width=3.2in}}
\vspace{-0.15in}
\caption{Probability of potential maximizer ($a^4$) as a function of temperature for different values of $\log_{\epsilon}{1-p_{\text{c}} \over p_{\text{c}}}$.}
\vspace{-0.15in}
\label{fig:m_curves}
\end{figure}

Fig. \ref{fig:m_curves} and \ref{fig:var_m} better highlight the transition behavior. Fig.\ \ref{fig:m_curves} shows the probability of the potential maximizer as a function of the temperature, for various values of $\log_{\epsilon}{1-p_{\text{c}} \over p_{\text{c}}}$ ($m$). The transition behavior of the curve $m = \log_{\epsilon}{1-p_{\text{c}} \over p_{\text{c}}} = 1$, is clearly observed. Finally, Fig.\ \ref{fig:var_m} shows the probability of the potential maximizer as a function of $\log_{\epsilon}{1-p_{\text{c}} \over p_{\text{c}}}$ for various values of temperature $\tau$. The transition point can clearly be seen at $\log_{\epsilon}{1-p_{\text{c}} \over p_{\text{c}}} = 1$. 

\begin{figure}[h]
\centering
\mbox{\epsfig{figure=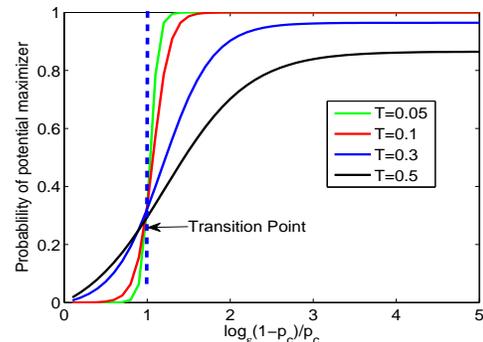,height=1.9in,width=2.8in}}
\vspace{-0.15in}
\caption{Probability of potential maximizer ($a^4$) vs $\log_{\epsilon}{1-p_{\text{c}} \over p_{\text{c}}}$ for different temperatures.}
\vspace{-0.27in}
\label{fig:var_m}
\end{figure}

\section{Conclusions}
In this paper, we considered the problem of distributed decision-making in multi-agent systems (via potential games) with an emphasis on the impact of realistic communication links. We showed how to extend the current literature on potential games with binary log-linear learning to account for stochastic communication channels. We derived conditions on the probabilities of link connectivity and BLLL's temperature to achieve a target probability for the set of potential maximizers. Furthermore, our toy example demonstrated a transition phenomenon for achieving any target probability.

\bibliography{ref}
\bibliographystyle{unsrt}

\end{document}